\documentclass[journal,twoside,web]{ieeecolor}
\usepackage{lcsys}
\usepackage{cite}
\usepackage{amsmath,amssymb,amsfonts}
\usepackage{algorithmic}
\usepackage{graphicx}
\usepackage{textcomp}
\def\BibTeX{{\rm B\kern-.05em{\sc i\kern-.025em b}\kern-.08em
    T\kern-.1667em\lower.7ex\hbox{E}\kern-.125emX}}





\pagestyle{empty}

\usepackage[utf8]{inputenc}
\usepackage{balance,soul}
\usepackage{color}
\usepackage[caption=false]{subfig}

\graphicspath{{./Figures/}}


\newcommand{\Real}{\mathbb{R}}
\newcommand{\Nat}{\mathbb{N}}

\newcommand{\Acal}{\ensuremath{\mathcal{A}}}
\newcommand{\Bcal}{\ensuremath{\mathcal{B}}}

\newcommand{\Mcal}{\ensuremath{\mathcal{M}}}
\newcommand{\Ecal}{\ensuremath{\mathcal{E}}}

\newcommand{\Fcal}{\ensuremath{\mathcal{F}}}
\newcommand{\Gcal}{\ensuremath{\mathcal{G}}}
\newcommand{\Scal}{\ensuremath{\mathcal{S}}}
\newcommand{\Vcal}{\ensuremath{\mathcal{V}}}
\newcommand{\Pcal}{\ensuremath{\mathcal{P}}}
\newcommand{\Rcal}{\ensuremath{\mathcal{R}}}
\newcommand{\Ncal}{\ensuremath{\mathcal{N}}}

\newcommand{\Ucal}{\ensuremath{\mathcal{U}}}

\newcommand{\Tcal}{\ensuremath{\mathcal{T}}}
\newcommand{\Zcal}{\ensuremath{\mathcal{Z}}}
\newcommand{\fM}{\mathfrak{M}}
\newcommand{\bS}{\mathbb{S}}

\newcommand{\stkout}[1]{\ifmmode\text{\sout{\ensuremath{#1}}}\else\sout{#1}\fi}

\newcommand{\varempty}{\emptyset}

%
\newtheorem{defn}{Definition}
\newtheorem{lemma}{Lemma}
\newtheorem{theorem}{Theorem}
\newtheorem{assumption}{Assumption}

\title{Excitation allocation for generic identifiability of linear dynamic networks with fixed modules}
\author{H.J.~Dreef, S.~Shi, X.~Cheng, M.C.F.~Donkers and P.M.J.~Van~den~Hof \thanks{This project has received funding from the European Research Council
		(ERC), Advanced Research Grant SYSDYNET, under the European
		Union’s Horizon 2020 research and innovation programme (grant agreement No. 694504).} \thanks{Mannes~Dreef, Tijs~Donkers and Paul~Van~den~Hof are with the Dept. of Electrical Engineering, Eindhoven University of Technology, Netherlands. Shengling Shi is  with the Delft Center for Systems and Control, Delft University of Technology, The Netherlands. Xiaodong Cheng is with the Department of Engineering, University of Cambridge, UK. {Email: \tt \{h.j.dreef,m.c.f.donkers,p.m.j.vandenhof\}@tue.nl}.}}

\begin{document}
\maketitle
\thispagestyle{empty}

\begin{abstract}
Identifiability of linear dynamic networks requires the presence of a sufficient number of external excitation signals. The problem of allocating a minimal number of external signals for guaranteeing generic network identifiability in the full measurement case has been recently addressed in the literature. Here we will extend that work by explicitly incorporating the situation that some network modules are known, and thus are fixed in the parametrized model set. The graphical approach introduced earlier is extended to this situation, showing that the presence of fixed modules reduces the required number of external signals. An algorithm is presented that allocates the external signals in a systematic fashion.
\end{abstract}
\begin{IEEEkeywords}
Network analysis and control, identification, linear systems
\end{IEEEkeywords}

\section{Introduction}
\IEEEPARstart{T}{he} recent attention for dynamical systems in the format of structured interconnections of individual dynamical subsystems, has generated many challenging research questions. In the area of data-driven modeling, attention has been given to methods for modeling either a complete network or a particular subsystem on the basis of (a selection of) measured network signals. In these problems the topology of the network, i.e. the interconnection structure, can either be known or also be subject of identification  \cite{Goncalves&Warnick:08,Materassi&Innocenti:10,Nabi&Mesbahi:12,VandenHof&etal_Autom:13,Materassi&Salapaka:20,Ramaswamy&VandenHof_TAC:21}.

For the problem of identifying a full linear dynamic network, in the situation that the topology is known, one of the key questions is whether there is enough excitation present in the data for uniquely recovering the dynamics of the network. This question is phrased through the concept of network identifiability \cite{Weerts&etal_Autom:18_identifiability,Hendrickx&Gevers&Bazanella_TAC:19,vanWaarde&etal_TAC:20a,Shi&etal_Autom:22}, being essentially dependent on the external signals present in the network, involving both user-chosen excitation signals and unmeasured disturbance signals, as well as on the structure of the model set reflecting possible prior knowledge on the network dynamics.

Recently, an algorithm has been presented in \cite{Cheng&etal_TAC:22} that aims to allocate a minimum number of external excitation signals in the network, so as to guarantee that a network model set becomes generically identifiable, for the situation that all nodes in the network are measured (full measurement). By focusing on {\it generic} identifability, i.e. the property holding for {\it almost all} models in the considered model set, the conditions for identifiability can be formulated in terms of the graph of the underlying model set, and are thus becoming easily applicable.

The graphical algorithm presented in \cite{Cheng&etal_TAC:22} decomposes the network graph into a set of distinct pseudotrees, on the basis of which allocation of external excitation signals can simply be executed.
In many network systems, the dynamics of some particular modules may be known and fixed, as e.g., user-designed controllers. The algorithm in \cite{Cheng&etal_TAC:22} can accommodate this to some extent by requiring that fixed modules do not need to be covered by the pseudotrees, possibly leading to a smaller number of required excitation signals. However it appears that this approach can be conservative, in the sense that it does not fully exploit the benefit of having some modules fixed, and therefore may yield more excitation signals than necessary. Therefore the key research question in this paper is: can we take advantage of the prior knowledge of these fixed modules to allocate excitation signals in a more efficient way?

In this paper, we extend the framework and algorithm in \cite{Cheng&etal_TAC:22} to explicitly incorporate fixed modules, by relaxing and generalizing
the graphical concept of pseudotree to a new concept (single-source identifiable multi-rooted graph (SIMUG)) that will be used to cover the graph of the network model set, and will be shown to provide the means for allocating external excitation signals more effectively.

After defining the appropriate network concepts in Section \ref{sec:prelim}, a motivating example and the resulting problem statement will be presented in Section \ref{sec:motivexam}. This leads to a new result for identifiability and a related allocation algorithm presented in Sections \ref{sec:approach}-\ref{sec:allo}. Finally a brief example is provided.

\textit{Nomenclature:} Denote $\Nat$ and $\Real$ as the sets of natural and real numbers; $\Real(z)$ is the rational function field over $\Real$ with variable $z$. $v_i$ denotes the $i$-th element of a vector $v$, and $A_{ij}$ denotes the $(i,j)$-th entry of a matrix $A$, and $A_{\star i}$ ($A_{i\star}$) its $i$-th column (row). The cardinality of a set $\Vcal$ is given by $|\Vcal|$. The edges of directed graph $\Gcal$ are denoted by $E(\Gcal)$ and its vertices by $V(\Gcal)$.

\section{Preliminaries}
\label{sec:prelim}

\subsection{Dynamic network setup}
We consider a dynamic network following the setup in \cite{VandenHof&etal_Autom:13}, which describes the dynamics and interconnection between a set of measured internal nodes
$\{w_1, \dots,w_L\}$.
The set of measured external excitation signals $\{ r_1, \dots, r_K \}$ can be manipulated by the user and the set of unmeasured disturbance signals is given by $\{v_1,\dots,v_L\}$. The expression for each node is given by
\begin{equation} \label{eq:network_setup}
	w_j(t) = \sum_{i=1}^{L} G_{ji}(q) w_i(t) + \sum_{k=1}^{K} R_{jk}(q) r_k(t) + v_j(t)
\end{equation}
where $G_{ji}(q), R_{jk}(q) \in \Real(q)$ are rational transfer functions that connect the nodes and excitation signals, with time-shift operator $q$ such that $q^{-1} w_j(t) = w_j(t-1)$. The elements $G_{ji}(q)$ are called modules, where $G_{jj}(q) = 0$, to exclude self-loops. The unmeasured process noise variables $v_j$ are collected in the vector process ${v = [v_1 \dots v_L]^\top}$, which is modeled as a stationary stochastic process with rational spectral density $\Phi_{v}(\omega)$, such that there exists a $p$-dimensional (zero-mean) white noise process ${e:=[e_1 \dots e_p]^\top}$, with ${p\leq L}$ and covariance matrix $\Lambda > 0$ such that
	$v(t) = H(q)e(t)$.
The combination of all the $L$ nodes can be written in terms of the full network expression
{\small \[ \setlength\arraycolsep{1pt} 
		\begin{bmatrix}
			w_1 \\ w_2 \\ \vdots \\ w_L
		\end{bmatrix} \! = \! \begin{bmatrix}
			0 & G_{12} & \cdots & G_{1 L} \\ G_{2 1} & 0 & \ddots & \vdots \\ \vdots & \ddots & \ddots & G_{L-1 L} \\ G_{L 1}& \cdots & G_{L L-1} &  0
		\end{bmatrix} \! \! \begin{bmatrix}
			w_1 \\ w_2 \\ \vdots \\ w_L
		\end{bmatrix} \!   + \! R \! \begin{bmatrix}
			r_1 \\ r_2 \\ \vdots \\ r_K
		\end{bmatrix} \!  + \! H \! \begin{bmatrix}
			e_1 \\ e_2 \\ \vdots \\ e_p
		\end{bmatrix}\!,
\]}

\noindent where dependence on $q$ is omitted for compactness of notation. The compact form of this equation is given by
\begin{equation} \label{eq:network_model}
	w = G w + R r + He.
\end{equation}
This leads to the following definition of a network model \cite{Weerts&etal_Autom:18_identifiability}.
\begin{defn}[Network model] \label{de:model}
	A network model of $L$ nodes, and $K$ external excitation signals, with a noise process of rank $p\leq L$ is defined by the quadruple
		$M = (G, R, H, \Lambda)$,
	with
	\begin{itemize}
	   	\item $G\in \Real^{L\times L}(q)$, diagonal entries $0$, all modules strictly proper\footnote{The condition of strictly proper modules can be relaxed in relation with a possible diagonal structure of $\Lambda$ and the presence/absence of algebraic loops in the network, see \cite{Weerts&etal_Autom:18_identifiability}.} and stable;
		\item $R \in \Real^{L\times K} (q)$, proper, with in each row a single nonzero entry;
		\item $H \in \Real^{L\times p} (q)$, proper and stable, with a left stable inverse, and a $p\times p$ submatrix being monic;
		\item $\Lambda \in \Real^{p\times p}$, $\Lambda >0$;
		 \item ${(I-G)^{-1}}$ is proper and stable (well-posedness);
	\end{itemize}
\end{defn}
For the purpose of studying network identifiability we need the following definition of a network model set.
\begin{defn}[Network model set]
    A network model set $\Mcal$ is defined as
        $\Mcal = \{M(\theta) = (G(\theta),R,H(\theta),\Lambda(\theta)); \theta \in \Theta\}$,
    with $\Theta \subset \Real^{n_\theta}$ a finite-dimensional parameter space, where each model satisfies the conditions of Definition \ref{de:model}.
\end{defn}
In this model set $R$ is considered to be known and therefore fixed. The elements in $G$, $H$ and $\Lambda$ can be parametrized, but some of these elements also can be known and thus fixed.

For the concept of network identifiability we follow the concept introduced in \cite{Weerts&etal_Autom:18_identifiability} and an extension towards genericity in \cite{Hendrickx&Gevers&Bazanella_TAC:19}, that in the current setting revolves around the transfer function
    $T(q,\theta) = (I-G(q, \theta))^{-1} \begin{bmatrix} H(q,\theta) & R(q) \end{bmatrix}$
as follows.
\begin{defn}[Network identifiability]
For a network model set $\Mcal$, and a model $M(q,\theta_0) \in \Mcal$ we consider the implication
     \begin{equation} \label{eq:identifiability}
        T(q,\theta_0) = T(q,\theta_1) \implies M(\theta_0) = M(\theta_1)
    \end{equation}
    for all $\theta_1 \in \Theta$. Then $\Mcal$ is globally (generically) identifiable from $(r,w)$ if implication (\ref{eq:identifiability}) holds for all (almost all) $\theta_0 \in \Theta$.
\end{defn}
The notion of generic identifiability is particularly attractive as it allows to be tested on the basis of graph-based tests, see e.g. \cite{Hendrickx&Gevers&Bazanella_TAC:19,Cheng&etal_TAC:22,Shi&etal_Autom:22}. This will be summarized in the next Subsection.

\subsection{Graph representation}
The dynamic network interconnection structure can be represented by a directed graph $\breve{\Gcal}$ that consists of the finite set of vertices $\breve{\Vcal} := \{1,2, \dots, L \}$ and the edge set $\breve{\Ecal} :=  \{ (i,j) \in  \breve{\Vcal} \times \breve{\Vcal} \ | \ G_{ji} \neq 0 \}$. See for details of graph theory e.g., \cite{Mesbahi2010}. The graph is simple, since no self-loops are present in the network model. The correlation structure of the noise signals $v$ will also be included in this graph representation by defining an extended graph as follows.
\begin{defn}[Extended graph \cite{Cheng&etal_TAC:22}]
    Consider a directed dynamic network \eqref{eq:network_model}. Let ${\breve{\Gcal}=(\breve{\Vcal},\breve{\Ecal})}$ be its underlying graph. The extended graph ${\Gcal=(\Vcal,\Ecal)}$ is defined as 
	\begin{align}
		\Vcal &:= \breve{\Vcal} \cup \{L+1, L+2, \dots, L + p \} \\
		\Ecal &:= \breve{\Ecal} \cup \{ (i,j) \in \Vcal \times \breve{\Vcal} \ | \  H_{j,i-L}(q) \neq 0, i > L \},
	\end{align}
	with $L = |\breve{\Vcal}|$ the number of nodes and $p$ the number of noise signals $e(t)$.
\end{defn}
Actually the white noise sources $e$ have been added as nodes in the graph, and edges can appear from $e$-nodes to $w$-nodes, but not reversed.\\
We will refer to the sets $\Ncal_j^- := \{ i \in \Vcal \ | \ (i,j) \in \Ecal \}$ and  $\Ncal_j^+ := \{ i \in \Vcal \ | \ (j,i) \in \Ecal \}$ as the set of in- and out-neighbors of node $j$, respectively. A path that connects the vertices $i_0$ to $i_n$ is a sequence of edges of the form ${(i_{k-1}, i_k)}, {k=1,\dots,n}$, where every vertex occurs at most once on the path. A single vertex is also considered to have a path to itself. Two paths are vertex-disjoint if they do not share any nodes, including starting and ending nodes. The maximum number of vertex-disjoint paths from a set $\Acal \subseteq \Vcal$ to a set $\Bcal \subseteq \Vcal$ is denoted by the operator $b_{\Acal \to \Bcal}$.

Based on this network model set we define a graph representation of the model set, where both the parametrized and fixed modules and elements in the $H$ matrix are explicitly represented by a set of parametrized $\Ecal_p$ and fixed edges $\Ecal_f$.
\begin{defn}[Graph representation of network model set]
A network model set $\Mcal$ has a graph representation through the extended graph $\Gcal = (\Vcal, \Ecal_p \cup \Ecal_f)$, where $\Ecal_p \cup \Ecal_f = \Ecal$ and $\Ecal_p \cap \Ecal_f = \emptyset$ determined by
\begin{equation*}
\begin{aligned}
            \Ecal_p &:= \{ (i,j) \in \Ecal \ | \ \mbox{edge } (i,j) \text{ is parametrized}\}, \mbox{and} \\
            \Ecal_f &:= \{ (i,j) \in \Ecal \ | \ \mbox{edge } (i,j) \text{ is fixed}\}.
        \end{aligned}
\end{equation*}
%
%
\end{defn}

This graph representation can be used to establish conditions for generic network identifiability of the network model set $\Mcal$, given some assumptions regarding the parametrization. The following assumptions are used throughout the paper.
\begin{assumption}[\!\!\cite{Shi&etal_Autom:22}] \label{as:identifiability}  \hfill
\begin{enumerate}
    \item All the parametrized entries in $M(\theta)$ are parametrized independently.
    \item In model set $\Mcal$, the rank of any submatrix of $\begin{bmatrix}
        G(q,\theta)-I & H(q,\theta) & R(q)
    \end{bmatrix}$ that does not depend on $\theta$, is equal to its structural rank\footnote{The structural rank of a matrix is the maximum rank of all matrices with the same nonzero pattern \cite{Steffen2005}.}.
\end{enumerate}
\end{assumption}
The second assumption here ensures that the numerical values of the fixed entries in the model set do not induce any singularity in the considered matrix. This assumption allows to formulate  conditions for identifiabiltiy on the basis of the graph representation of the model set.
\begin{lemma}[Generic network identifiability] \label{lem:Identifiability}
Consider a network model set $\Mcal$ with graph representation $\Gcal$ that satisfies Assumption \ref{as:identifiability}, and let $\Ucal := \Rcal \cup \{L+1,\cdots,L+p\}$ with $\Rcal \subset \breve\Vcal$ the set of $w$-vertices that are directly excited by $r$ signals. Then, the network model set $\Mcal$ is generically identifiable from $(w,r)$ if in $\Gcal$ it holds that
	$b_{\Ucal \to \Pcal_j} = |\Pcal_j|$
	for all nodes $j \in \breve{\Vcal}$, where
\begin{equation} \label{eq:Pj}
\Pcal_j := \{ i \in \Ncal_j^{-} \ | \ (i,j) \in \Ecal_p \}.
\end{equation}
\end{lemma}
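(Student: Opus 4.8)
The plan is to prove identifiability one node (row) at a time, reducing the implication \eqref{eq:identifiability} to a generic rank condition on a single block of the transfer matrix, and then to read that rank off the graph as the path count $b_{\Ucal\to\Pcal_j}$. I would start from the hypothesis $T(q,\theta_0)=T(q,\theta_1)=:T$. Both models obey $(I-G(\theta))T=\begin{bmatrix}H(\theta) & R\end{bmatrix}$, and since $R$ is fixed it cancels upon subtraction, leaving $(G(\theta_1)-G(\theta_0))\,T=\begin{bmatrix}H(\theta_0)-H(\theta_1) & 0\end{bmatrix}$. Writing $\Delta G_{ji}$ and $\Delta H_{j\ell}$ for the corresponding differences, these vanish on every fixed edge by construction, so only edges in $\Pcal_j$ (see \eqref{eq:Pj}) survive. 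Appending to $T$ the $p$ trivial rows $\begin{bmatrix}I_p & 0\end{bmatrix}$ (the identity map on the noise inputs) yields a matrix $\bar T$ indexed by $\Vcal$, in terms of which the $j$-th row identity collapses to a single relation $\sum_{i\in\Pcal_j}c_i\,\bar T_{i\star}=0$ over $\Real(q)$, each $c_i$ being the pertinent $\Delta G$ or $\Delta H$. Hence the parametrized entries feeding node $j$ are identified exactly when the rows $\{\bar T_{i\star}\}_{i\in\Pcal_j}$ are linearly independent, i.e.\ when $\bar T_{\Pcal_j\star}$ has full row rank $|\Pcal_j|$.

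Next I would characterize this rank graphically. Expanding $(I-G)^{-1}=\sum_{m\ge0}G^m$ shows that the entry of $\bar T$ linking an external source to a node $i$ is generically nonzero precisely when $\Gcal$ contains a directed path from the vertex at which that source enters --- a vertex of $\Ucal$ --- to $i$, while the appended noise-node rows correspond to trivial single-vertex paths. The structural pattern of $\bar T_{\Pcal_j\star}$ is therefore governed by reachability from $\Ucal$ in $\Gcal$, and by the standard structural-rank characterization of such transfer matrices (cf.\ \cite{Hendrickx&Gevers&Bazanella_TAC:19,Shi&etal_Autom:22}) its generic rank equals the maximum number of vertex-disjoint paths from $\Ucal$ to $\Pcal_j$, namely $b_{\Ucal\to\Pcal_j}$. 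Assumption \ref{as:identifiability}(2) is exactly what permits equating the generic rank with this purely combinatorial quantity, by excluding rank-reducing cancellations among the fixed entries.

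The conclusion then follows by a genericity argument. Under $b_{\Ucal\to\Pcal_j}=|\Pcal_j|$ for all $j\in\breve\Vcal$, each $\bar T_{\Pcal_j\star}$ has full row rank for generic $\theta$: its $|\Pcal_j|\times|\Pcal_j|$ minors are rational functions of $\theta$ that, by the structural full-rank just established, are not identically zero, so they vanish only on a proper subvariety of $\Theta$, and the union of these finitely many exceptional sets still has measure zero. Fixing $\theta_0$ outside it, any $\theta_1$ with $T(\theta_1)=T(\theta_0)$ forces every $c_i=0$ in the relation above, whence $G(\theta_0)=G(\theta_1)$ and $H(\theta_0)=H(\theta_1)$, with fixed edges agreeing by construction; together with $R$ fixed and $\Lambda$ determined by the identified noise model as in \cite{Weerts&etal_Autom:18_identifiability}, this gives $M(\theta_0)=M(\theta_1)$, i.e.\ generic identifiability.

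The hard part will be the graph-theoretic step of the second paragraph: proving that the generic rank of $\bar T_{\Pcal_j\star}$ coincides with $b_{\Ucal\to\Pcal_j}$ in the presence of \emph{both} parametrized and fixed edges, where Assumption \ref{as:identifiability}(2) must be invoked carefully so that the fixed values do not accidentally lower the rank below its structural value. A secondary, more bookkeeping, difficulty is aligning the genericity quantifiers, so that one measure-zero exceptional set in $\theta_0$ serves all rows $j$ and all competing $\theta_1$ at once.
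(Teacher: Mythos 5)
Your outline is essentially correct, but it takes a genuinely different route from the paper: the paper's proof of Lemma~\ref{lem:Identifiability} is a two-line reduction to prior results, invoking Lemma~2 of \cite{Cheng&etal_TAC:22} (the same statement with $\Pcal_j$ replaced by the full in-neighbour set $\Ncal_j^-$) and then Theorem~3 of \cite{Shi&etal_Autom:22} to justify shrinking $\Ncal_j^-$ to the parametrized subset $\Pcal_j$, whereas you reconstruct the underlying mechanism from first principles. Your key step --- that the fixed entries cancel in $G(\theta_1)-G(\theta_0)$ and $H(\theta_0)-H(\theta_1)$, so that the $j$-th row of $T(\theta_0)=T(\theta_1)$ collapses to $\sum_{i\in\Pcal_j}c_i\bar{T}_{i\star}=0$ and identifiability of row $j$ reduces to full row rank of $\bar{T}_{\Pcal_j\star}$ --- is exactly the reason the restriction to $\Pcal_j$ is legitimate, i.e.\ it is the content the paper delegates to the citations, made explicit. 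What your version buys is transparency about where each hypothesis enters (independent parametrization for the non-vanishing of the minors; Assumption~\ref{as:identifiability}(2) for equating generic and structural rank when fixed rational entries are mixed with free parameters); what it does not buy is self-containedness, since the one genuinely hard step --- that the generic rank of $\bar{T}_{\Pcal_j\star}$ equals $b_{\Ucal\to\Pcal_j}$ in the presence of fixed edges --- you still only cite (to \cite{Hendrickx&Gevers&Bazanella_TAC:19,Shi&etal_Autom:22}), which is the same place the paper outsources it.

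Two small points, neither fatal. First, your phrase ``identified \emph{exactly} when the rows are linearly independent'' claims a necessity direction that would require exhibiting admissible perturbations within the model set; only sufficiency is needed for the lemma, and that direction of your argument is fine. Second, the quantifier bookkeeping you flag as a difficulty is actually unproblematic: the rank condition involves $\theta_0$ only, so a single null set (a finite union over $j\in\breve{\Vcal}$ of proper subvarieties of $\Theta$) serves uniformly for all competing $\theta_1$. The recovery of $\Lambda(\theta_0)=\Lambda(\theta_1)$ from $T$-equality is glossed over both in your argument and in the paper's own definition, and is inherited from \cite{Weerts&etal_Autom:18_identifiability}.
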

\begin{proof}
This Lemma is formulated and proven as Lemma 2 in \cite{Cheng&etal_TAC:22} but for the set $\Pcal_j$ replaced by the set of all in-neighbours of $w_j$ in the extended graph. With the subsystems generic identifiability result from Theorem 3 in \cite{Shi&etal_Autom:22} it can simply be shown that the original Lemma extends to the situation of considering only the subset $\Pcal_j$ of parametrized modules, rather than the full set of in-neighbours of node $w_j$.
\end{proof}


Whereas Lemma \ref{lem:Identifiability} provides the condition for {\it verifying} identifiability node-wise, the target is to devise a method to allocate excitation signals that guarantees that the conditions of Lemma \ref{lem:Identifiability} are satisfied for all nodes simultaneously.

\section{Motivating example \& Problem statement}
\label{sec:motivexam}
The original formulation of Lemma \ref{lem:Identifiability} in \cite{Cheng&etal_TAC:22} has led to an algorithm for the allocation of external excitation signals based on a covering of the extended graph of the model set with disjoint pseudotrees. A pseudotree is a subgraph in which the maximum indegree of each vertex is $1$. This implies that a pseudotree can be built up of a cycle with outgoing trees starting from the vertices in the cycle. Pseudotrees are disjoint if they do not share any edges, and all outgoing edges of each vertex belong to the same pseudotree \cite{Cheng&etal_TAC:22}.

An illustrative example of a pseudotree covering is provided in Figure \ref{fig:MotivatingExample}\subref{fig:MotEx1} where all modules are parametrized, and the graph is covered by two disjoint pseudotrees indicated in red and blue. Generic network identifiability is guaranteed if one of the roots in each pseudotree is excited, where in the case of a pseudotree with a cycle each vertex in the cycle acts as a root. In the example in Figure \ref{fig:MotivatingExample}\subref{fig:MotEx1} this implies that excitation at node $w_2$ together with one of the nodes $(w_1, w_4, w_5)$ is sufficient for guaranteeing generic identifiability. The result in \cite{Cheng&etal_TAC:22} extends to the situation where some modules are fixed, by requiring that only the subgraph composed of parametrized modules should be covered in the pseudotree covering.
An example of this is provided in
Figure \ref{fig:MotivatingExample}\subref{fig:MotEx2}, where the fixed modules are represented by dashed lines.
If we remove the fixed edges and cover the remaining graph with pseudotrees, excitation signals will be allocated at $w_4$ and $w_5$, showing that the number of excitations increases when some modules are fixed, being a result of conservatism in the algorithm.
At the same time it can be observed that the single pseudotree
$w_5 - w_1 - w_4 - w_2 - w_3 - w_5$, covers all parametrized edges, and so would lead to allocation of one external excitation at any of the nodes in this cycle. However this solution, with a reduced number of excitations, will typically not be found by the pseudotree-covering algorithm. In this article we will extend the allocation algorithm of \cite{Cheng&etal_TAC:22} to effectively deal with the fixed modules in a structural way.
%
%
%


\begin{figure}%
\centering
\subfloat[]{\includegraphics[scale=0.8]{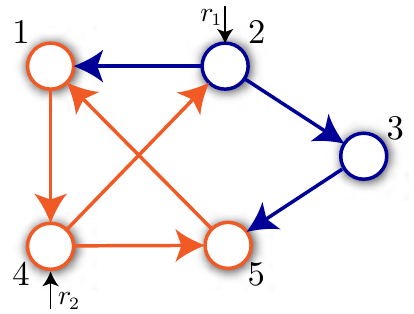}\label{fig:MotEx1}}
\subfloat[]{\includegraphics[scale=0.8]{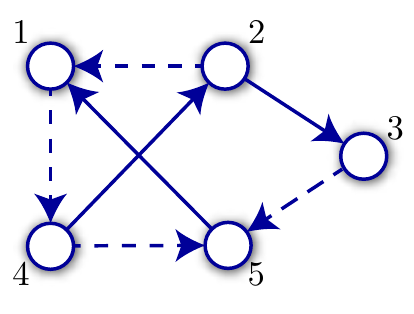}\label{fig:MotEx2}}
\caption[]{An example of a network model set with: (\subref*{fig:MotEx1}) only parametrized modules (solid) and (\subref*{fig:MotEx2}) including fixed modules (dashed).}
\label{fig:MotivatingExample}
\vspace{-.4cm}
\end{figure}


\section{Graph covering} \label{sec:approach}
%
The graphical approach to allocate excitation signals comes down to decomposing the network in sub-graphs with the property that in each sub-graph all of its modules are identifiable by applying a single excitation source. Given this decomposition, it is shown that applying excitation signals to a specific set of nodes associated with these sub-graphs leads to an identifiable network model set.

The sub-graphs that are used belong to a family of graphs, called the multi-rooted graphs, defined as follows.
\begin{defn}[Multi-rooted graph]
    A connected directed graph $\Tcal$, with $|V(\Tcal)| \geq 2$, is called a (directed) \textbf{multi-rooted graph} if there exists a path from each node in a nonempty set of \textbf{roots} $\Upsilon(\Tcal)$ to every node $i \in V(\Tcal)$.
\end{defn}
%
For our identifiability study we will restrict attention to a particular class of multi-rooted graphs, for which each node only has maximally one incoming parametrized edge.
\begin{defn}[Single-source identifiable multi-rooted graph]
    A multi-rooted graph $\Tcal$ is called a \textbf{single-source identifiable multi-rooted graph} (SIMUG) if all nodes $j \in V(\Tcal)$ satisfy ${|\Pcal_j|\leq1}$, with $\Pcal_j$ defined in (\ref{eq:Pj}).
\end{defn}
\begin{figure}[b]
\vspace{-.4cm}
\centering\subfloat[]{\includegraphics[width=0.25\columnwidth]{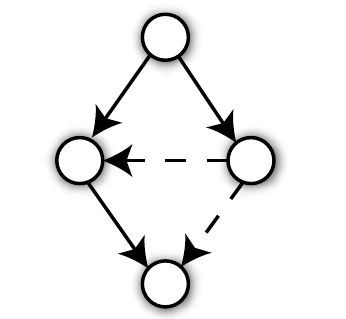}}
\subfloat[]{\includegraphics[width=0.25\columnwidth]{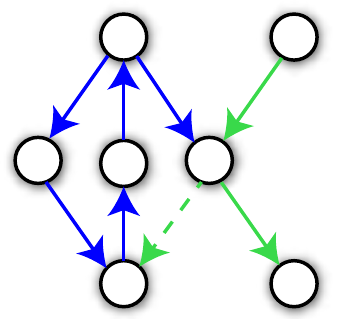}}
\subfloat[]{\includegraphics[width=0.25\columnwidth]{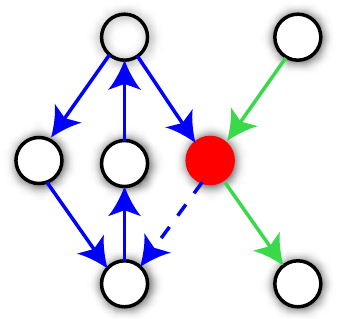}}
\caption{(a) Example of a SIMUG, the maximum number of parametrized incoming edges of a node does not exceed $1$; (b) a covering with two edge-disjoint SIMUGs; (c)
The two SIMUGs are not edge-disjoint since one (red-colored) node has two outgoing edges from different SIMUGs. Fixed edges are dashed arrows.}
\label{fig:SIMUG}
\end{figure}
\begin{defn}[Edge-disjoint SIMUGs, \cite{Cheng&etal_TAC:22}] \label{de:disjoint}
    Consider two multi-rooted graphs $\Tcal_1$ and $\Tcal_2$ as subgraphs of a directed graph $\Gcal$. $\Tcal_1$ and $\Tcal_2$ are called \textbf{edge-disjoint} in $\Gcal$ if the following two conditions hold:
    \begin{enumerate}
        \item $\Tcal_1$ and $\Tcal_2$ have no common edges;
        \item For each node $j$ in $\Tcal_1 \cup \Tcal_2$, all outgoing edges in $\Tcal_1 \cup \Tcal_2$ belong to either $\Tcal_1$ or to $\Tcal_2$.
    \end{enumerate}
\end{defn}
%
%
The concept of edge-disjoint SIMUGs is illustrated in Figure \ref{fig:SIMUG}.
In line with the disjoint edge-covering used in \cite{Cheng&etal_TAC:22} we can now define:
\begin{defn}[Edge-disjoint SIMUG covering]
   Consider a directed graph $\Gcal$, and let $\Pi := \{ \Tcal_1, \Tcal_2, \dots, \Tcal_n \}$ be a collection of edge-disjoint SIMUGs of $\Gcal$. The edges in a set $\Ecal \subseteq E(\Gcal)$ are covered by $\Pi$, if $E(\Tcal_1) \cup E(\Tcal_2) \cup \dots \cup E(\Tcal_n) = \Ecal$, and $\Pi$ is called an edge-disjoint SIMUG covering of $\Ecal$.
\end{defn}

It follows directly from the existence results of pseudotree coverings in \cite{Cheng&etal_TAC:22} that every graph can be covered by edge-disjoint SIMUGs.
The concepts that are defined above provide the means for specifying the conditions for allocating external excitaton signals so as to guarantee network identifiability.


\begin{theorem}\label{th:MRG_ident}
    Consider a network model set $\Mcal$ that satisfies Assumption \ref{as:identifiability} and let $\Gcal = (\Vcal,\Ecal)$ be its extended graph.
    Then $\Mcal$ is generically identifiable from $(r,w)$ if there exists an edge-disjoint SIMUG covering of $\Ecal$, denoted by $\Pi = \{\Tcal_1, \Tcal_2, \dots, \Tcal_{n}\}$, such that $\forall k \in \{1,2, \dots, n\}$ 
    there exists a vertex $\tau_k \in \Upsilon(\Tcal_k)$ that is externally excited by an independent $r$ or $e$-signal.
\end{theorem}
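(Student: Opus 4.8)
The plan is to reduce the claim to the node-wise rank condition of Lemma~\ref{lem:Identifiability}: it suffices to prove that $b_{\Ucal \to \Pcal_j} = |\Pcal_j|$ for every $j \in \breve{\Vcal}$. One inequality is free, since any family of vertex-disjoint paths terminating in the $|\Pcal_j|$-element set $\Pcal_j$ can contain at most $|\Pcal_j|$ paths (each ending at a distinct target), so $b_{\Ucal \to \Pcal_j} \le |\Pcal_j|$ always holds. Hence the whole argument comes down to exhibiting, for each fixed $j$, a collection of exactly $|\Pcal_j|$ pairwise vertex-disjoint paths running from the excited set $\Ucal$ into $\Pcal_j$.

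First I would exploit the interplay between the SIMUG property and edge-disjointness to organise the targets. Every parametrized edge $(i,j)$ with $i \in \Pcal_j$ is covered by $\Pi$, hence belongs to a unique SIMUG $\Tcal_{k_i}$; and since each node has at most one incoming parametrized edge inside any single SIMUG, two distinct targets $i \ne i'$ cannot lie in the same SIMUG (otherwise that SIMUG would carry two parametrized edges into $j$). Thus the assignment $i \mapsto \Tcal_{k_i}$ is injective and the $|\Pcal_j|$ targets are distributed over $|\Pcal_j|$ distinct SIMUGs, each carrying an externally excited root $\tau_{k_i} \in \Upsilon(\Tcal_{k_i})$. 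Because a SIMUG is multi-rooted and $i \in V(\Tcal_{k_i})$, the node $i$ is reachable from $\tau_{k_i}$ within $\Tcal_{k_i}$; prepending to such a root-to-$i$ path the excitation source in $\Ucal$ — either the $r$-excited root $\tau_{k_i}$ itself, or the $e$-vertex feeding it — produces a path $P_i$ from $\Ucal$ to $i$, one per target.

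The crux, and the step I expect to be the main obstacle, is to argue that the paths $\{P_i\}_{i \in \Pcal_j}$ may be taken pairwise vertex-disjoint, since edge-disjointness allows distinct SIMUGs to share vertices. Here the second condition in Definition~\ref{de:disjoint} is decisive: at every vertex all outgoing edges lie in a single SIMUG. Consequently, if two paths $P_i$ and $P_{i'}$ — which live in the distinct SIMUGs $\Tcal_{k_i}$ and $\Tcal_{k_{i'}}$ — were to share a vertex $u$ in their interiors, then $u$ would emit an outgoing edge in $\Tcal_{k_i}$ and another in $\Tcal_{k_{i'}}$, contradicting this condition. Hence any shared vertex must be a terminal vertex of at least one of the two paths, and a short truncation/rerouting argument, combined with the injectivity established above and the independence of the excitation signals (which keeps the initial vertices distinct), removes all remaining overlaps. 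With $|\Pcal_j|$ vertex-disjoint paths in hand for every $j$, the condition $b_{\Ucal \to \Pcal_j} = |\Pcal_j|$ holds throughout $\breve{\Vcal}$, and Lemma~\ref{lem:Identifiability} delivers generic identifiability of $\Mcal$.
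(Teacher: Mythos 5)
Your proposal is correct and follows essentially the same route as the paper: it reduces to the node-wise condition of Lemma~\ref{lem:Identifiability}, assigns the elements of $\Pcal_j$ to distinct SIMUGs via the single-incoming-parametrized-edge property, builds root-to-target paths inside each SIMUG, and obtains vertex-disjointness from the second condition of Definition~\ref{de:disjoint} (all outgoing edges of a vertex lie in one SIMUG). The paper's own proof is an equally brief sketch quoting the pseudotree argument of \cite{Cheng&etal_TAC:22} with pseudotrees replaced by SIMUGs, so your slightly informal ``truncation/rerouting'' step is at the same level of detail as the original.
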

\begin{proof}
The proof is analogous to \cite[Th.~1]{Cheng&etal_TAC:22}, with the difference that the pseudotrees that were used in
\cite{Cheng&etal_TAC:22} have to be replaced by the SIMUGs introduced here.
%
The following reasoning from \cite[Th.~1]{Cheng&etal_TAC:22}, modified to the setting in this paper, still applies: ``The disjointness of the SIMUGs in Definition \ref{de:disjoint} implies that the paths in different disjoint SIMUGs are vertex-disjoint, if they have no common starting or ending nodes, and, for any vertex $j \in V(\Gcal)$, all the edges incident from the vertices in $\Pcal_j$ to $j$ should belong to distinct SIMUGs. Furthermore, any two disjoint SIMUGs cannot share common root nodes, and thus $\tau_i \neq \tau_j$, for all $i \neq j$. Consequently, the above properties of disjoint SIMUGs yield that there exist $|\Pcal_j|$ vertex-disjoint paths from $\{\tau_1, \tau_2,\dots, \tau_{n}\}$ to $\Pcal_j$." Then, application of Lemma \ref{lem:Identifiability} shows the result.
\end{proof}
With the result of Theorem \ref{th:MRG_ident}, we have reduced the allocation problem to the problem of finding a minimal number of SIMUGs that cover the extended graph.

\section{Allocation algorithm}
\label{sec:allocation}
\subsection{Introduction}
For formulating an algorithm that generates a SIMUG covering of the network, we follow an approach which is very closely related to the algorithm in \cite{Cheng&etal_TAC:22}, however adapted to the situation of having SIMUGs with possibly fixed modules. As a strategy we are going to start with an initial SIMUG covering of the network, and then we are going to merge SIMUGs so as to arrive at a smaller number of SIMUGs that cover the network. In this setting we have to formulate appropriate conditions for merging two SIMUGs, and we need new algebra for formulating the merging algorithm.

As an initial covering we start with the SIMUG covering
	$\Pi_0 = \{\Tcal_1^{(0)}, \Tcal_2^{(0)}, \dots, \Tcal_{|\Pi_0|}^{(0)}\}$
where $|\Pi_0| = |\Vcal| - |\Scal_{in}|$ with $\Scal_{in}$ the set of sinks in the graph, where for each node $k \in \Vcal$ that is not a sink, $\Tcal_k^{(0)}$ is composed of node $k$ and all of its outgoing edges in $\Ecal$.
%
%
\begin{defn}[Mergeability \cite{Cheng&etal_TAC:22}] \label{de:Mergeability}
		Consider two disjoint SIMUGs $\Tcal_1$ and $\Tcal_2$ and $V(\Tcal_1) \cap V(\Tcal_2) = \emptyset$. We say that $\Tcal_1$ is mergeable to $\Tcal_2$, if
		\begin{enumerate}
			\item the union of $\Tcal_1$ and $\Tcal_2$, i.e., $(V(\Tcal_1) \cup V(\Tcal_2) , E(\Tcal_1) \cup E(\Tcal_2))$ is also a SIMUG;
			\item and there is a directed path from every vertex $i \in \Upsilon(\Tcal_2)$ to every vertex $j\in V(\Tcal_1)$. 
		\end{enumerate}
	\end{defn}

\medskip
To every SIMUG covering of the network a characteristic matrix is connected, that will be used for steering the merging operations of the SIMUGs. It is defined as follows.

\begin{defn} \label{de:MergMat}
	Denote a set $\bS =\{1,0, \varnothing \}$ and
	let $\Pi = \{ \Tcal_1, \Tcal_2, \dots, \Tcal_n \}$ be a disjoint SIMUG covering of a directed graph. The characteristic matrix of $\Pi$ is denoted by $\fM \in \bS^{n\times n}$, whose $(i,j)$-$th$ entry is defined as
	\begin{equation} \label{eq:charaM}
		\fM_{ij} = \begin{cases}
			1 & \text{if } \Tcal_i \text{ is mergeable to } \Tcal_j \\
			\varnothing & \text{if } \Zcal_{ij} = \emptyset, \\
			0 & \text{otherwise}
		\end{cases}
	\end{equation}
	where $\Zcal_{ij} :=
    \{ x \in V(\Tcal_j) \cup V(\Tcal_i) \ | \ \lvert \Pcal_x \cap (V(\Tcal_j) \cup V(\Tcal_i)) \rvert > 1 \}$
    is the set of nodes in $\Tcal_i$ and $\Tcal_j$ that have multiple parametrized in-neighbors in the two SIMUGs. 
\end{defn}

The characteristic matrix in \eqref{eq:charaM} will serve as an algebraic means for merging SIMUGs in an algorithm that provides a covering of the  network with a reduced number of SIMUGs.

\subsection{Initial/characteristic matrix algorithm}
For specification of the characteristic matrix of the initial covering we follow a reasoning similar to the result of Lemma 5 in \cite{Cheng&etal_TAC:22}. However, due to the decomposition of $\Gcal$ in $\Gcal_f$ and $\Gcal_p$, this reasoning leads to new technical expressions that require an independent proof.

%
\begin{lemma} \label{lemma2} Given the extended graph representation ${\Gcal = (\Vcal, \Ecal_p \cup \Ecal_f)}$ of the network model set $\Mcal$, for which the edges are separated in the parametrized $\Gcal_p = (\Vcal, \Ecal_{p})$ and fixed graph $\Gcal_f = (\Vcal, \Ecal_{f})$ with $n = |\Vcal|$, the corresponding adjacency matrices are given by $A(\Gcal_p)$ and $A(\Gcal_f)$, respectively. Denote
	\begin{equation} \label{eq:aij}
		a_{ij} = \left( [n A(\Gcal_p)-\!A(\Gcal_f) +\!I\mathrm{i}]_{\star i} \right)^{\!\top} \left( [n A(\Gcal_p) -\!A(\Gcal_f)]_{\star j} \right)\!
	\end{equation}
	where $i,j \in 1,2,\dots,|\Pi_0|$, $\mathrm{i}$ denotes the imaginary unit.
Then, the characteristic matrix $\fM^{(0)}$ of $\Pi_0$ is given by: $\fM^{(0)}_{ii}=0$ for all $i$, while for $j\neq i$:
	\begin{equation} \label{eq:CharacteristicMatrixAlgorithm}
		\fM_{ij}^{(0)} = \begin{cases}
			1 & \mbox{if }\mathrm{Re}(a_{ij}) < n \text{ and } \mathrm{Im}(a_{ij}) \neq 0; \\
			\varnothing & \mbox{if }\mathrm{Re}(a_{ij}) < n\text{ and } \mathrm{Im}(a_{ij}) = 0; \\
			0 & \mbox{if }\mathrm{Re}(a_{ij}) \geq n.
		\end{cases}
	\end{equation}
	where $\mathrm{Re}(\cdot)$ and $\mathrm{Im}(\cdot)$ denote the real and imaginary parts of a complex number.
\end{lemma}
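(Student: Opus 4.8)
The plan is to compute the complex scalar $a_{ij}$ of \eqref{eq:aij} explicitly in terms of the graph structure and then read off each of the three cases of $\fM^{(0)}$ from its real and imaginary parts. First I would record what the initial SIMUGs look like: for a non-sink node $i$, $\Tcal_i^{(0)}$ has vertex set $\{i\}\cup\Ncal_i^+$, edge set $\{(i,k):k\in\Ncal_i^+\}$, and single root $\Upsilon(\Tcal_i^{(0)})=\{i\}$. Writing $B:=nA(\Gcal_p)-A(\Gcal_f)$, with the adjacency convention matching the module indexing so that the $i$-th column $B_{\star i}$ lists the out-edges of node $i$ (a parametrized one contributing $n$ and a fixed one $-1$, which is unambiguous since $\Ecal_p\cap\Ecal_f=\emptyset$), the column $i$ of $nA(\Gcal_p)-A(\Gcal_f)+I\mathrm{i}$ is $B_{\star i}+\mathrm{i}\,e_i$. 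Hence $a_{ij}=(B_{\star i}+\mathrm{i}\,e_i)^\top B_{\star j}=\langle B_{\star i},B_{\star j}\rangle+\mathrm{i}\,B_{ij}$, so that $\mathrm{Re}(a_{ij})=\sum_k B_{ki}B_{kj}$ and $\mathrm{Im}(a_{ij})=B_{ij}$. The diagonal $\fM^{(0)}_{ii}=0$ is fixed by convention, so I would only analyse $j\neq i$.

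Next I would give the graph-theoretic meaning of the two parts. The imaginary part $\mathrm{Im}(a_{ij})=B_{ij}$ is nonzero exactly when there is an edge (parametrized or fixed) from $j$ to $i$. In the union $\Tcal_i^{(0)}\cup\Tcal_j^{(0)}$ the only out-edges emanate from $i$ and $j$, so the sole way to reach $i$, and thence every $k\in\Ncal_i^+$ via $j\to i\to k$, from $j$ is the direct edge $j\to i$; thus $\mathrm{Im}(a_{ij})\neq0$ is precisely the reachability condition in the second item of Definition \ref{de:Mergeability}, and it simultaneously makes $\{j\}$ a root reaching the whole union, securing multi-rootedness. For the real part, $B_{ki}B_{kj}\neq0$ iff $k$ is a common out-neighbour of $i$ and $j$, and the summand equals $n^2$ when both edges are parametrized, $-n$ when exactly one is, and $1$ when both are fixed.

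The crux is the threshold argument showing $\mathrm{Re}(a_{ij})\ge n$ iff $i$ and $j$ share a parametrized out-neighbour. Let $c_{pp},c_{\mathrm{mix}},c_{ff}$ count the common out-neighbours of the three types, so $\mathrm{Re}(a_{ij})=n^2c_{pp}-n\,c_{\mathrm{mix}}+c_{ff}$. Since the common out-neighbours are distinct vertices excluding $i$ and $j$ (no self-loops), their total number is at most $n-2<n$. If $c_{pp}=0$ then $\mathrm{Re}(a_{ij})\le c_{ff}\le n-2<n$; if $c_{pp}\ge1$ then $c_{\mathrm{mix}}\le n-1$, whence $\mathrm{Re}(a_{ij})\ge n^2-n(n-1)=n$. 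This is exactly where the coefficients $n$ and $-1$ are engineered: $n$ is large enough that a single parametrized coincidence dominates all the $\pm1$-scale contributions, yet their absence can never reach $n$. Finally I would identify ``common parametrized out-neighbour'' with the two facts needed: (a) it is the only way a union-vertex acquires two parametrized in-edges, i.e.\ it violates the SIMUG requirement $|\Pcal_x|\le1$ appearing in the first item of Definition \ref{de:Mergeability}; and (b) such a node $x$ has $\{i,j\}\subseteq\Pcal_x\cap(V(\Tcal_i)\cup V(\Tcal_j))$, which is exactly the content of $\Zcal_{ij}\neq\emptyset$.

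Assembling these per the priority order in Definition \ref{de:MergMat}: $\Tcal_i$ is mergeable to $\Tcal_j$ iff the edge $j\to i$ exists (its second item, i.e.\ $\mathrm{Im}(a_{ij})\neq0$) and the union is a SIMUG; given that edge the union is automatically multi-rooted, so being a SIMUG reduces to $|\Pcal_x|\le1$, i.e.\ $\mathrm{Re}(a_{ij})<n$. This yields $\fM^{(0)}_{ij}=1$ on $\{\mathrm{Re}<n,\ \mathrm{Im}\neq0\}$; if $\mathrm{Re}(a_{ij})\ge n$ then $\Zcal_{ij}\neq\emptyset$ and merging fails, giving $0$; and the remaining case $\{\mathrm{Re}<n,\ \mathrm{Im}=0\}$ is non-mergeable with $\Zcal_{ij}=\emptyset$, giving $\varnothing$, which reproduces \eqref{eq:CharacteristicMatrixAlgorithm}. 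I expect the main obstacle to be pinning down the correct reading of $\Zcal_{ij}$ for the initial covering, namely that, since every edge of $\Tcal_i^{(0)}\cup\Tcal_j^{(0)}$ issues from $i$ or $j$, the only parametrized in-neighbours of a union-vertex that matter are $i$ and $j$ themselves, together with the exactness of the $n-2$ vertex bound in the threshold separation; the remainder is bookkeeping over edge types.
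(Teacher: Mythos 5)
Your proof follows essentially the same route as the paper's: decompose $a_{ij}$ into $\mathrm{Re}(a_{ij})=n^2c_{pp}-nc_{\mathrm{mix}}+c_{ff}$ and $\mathrm{Im}(a_{ij})=B_{ij}$, use the $n-2$ bound on common out-neighbours to show the threshold $n$ separates $c_{pp}>0$ from $c_{pp}=0$, and read the imaginary part as the existence of the edge $j\to i$ that makes $j$ a root of the union. Your write-up is in fact more explicit than the paper's (which states the same counting argument with $x_1,x_2,x_3$ but leaves the interpretation of $\Zcal_{ij}$ for the initial covering implicit), so no substantive difference to report.
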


\begin{proof}
This proof essentially shows that the initial mergeability matrix in Definition \ref{de:MergMat} can be found, using \eqref{eq:aij} based on the adjacency matrices of the fixed and parametrized parts of the extended graph.

\textbf{\textit{Case $\mathrm{Re}(a_{ij}) \geq n$:}} This result indicates that multiple parametrized edges enter the same node from both SIMUGs, which results in a nonmergeable `$0$' entry in the matrix. 
The distinction between parametrized and fixed edges is made due to the scaling factor $n$, which
leads to
\begin{equation} \label{eq:LemPrGeq}
    \mathrm{Re}(a_{ij}) =  x_1 n^2 - x_2 n + x_3 \geq n,
\end{equation}
with $x_1, x_2, x_3 \in \Nat$ denoting the occurrences of respectively common parametrized edges to the same node, i.e., $x_1 = |\Zcal_{ij}|$, a parametrized and fixed edge to the same node or common fixed nodes. Note that due to the number of nodes the following constraint holds
\begin{equation} \label{eq:LemConstraint}
    x_1 + x_2 + x_3 \leq n - 2,
\end{equation}
since $[n A(\Gcal_p) - A(\Gcal_f)]_{i i} = 0$ by exclusion of self-loops. Then, we show that $\mathrm{Re}(a_{ij}) \geq n$ if and only if $x_1 > 0$, since for $x_1 = 0$ we have $\mathrm{max}_{x_2,x_3} - x_2 n + x_3 = n-2$ with $x_2 = 0$ and $x_3 = n-2$ by \eqref{eq:LemConstraint}. Now, also any solution for $x_1>0$ implies that \eqref{eq:LemPrGeq} holds, since $x_2$ is constrained by $x_2 \leq n-2 - x_1$. Therefore, $\mathrm{Re}(a_{ij}) \geq n$ results in $\fM_{ij} = 0$, because the union of $\Tcal_i$ and $\Tcal_j$ is clearly not a SIMUG since a node has $|\Pcal_j| > 1$.

\textbf{\textit{Case $\mathrm{Re}(a_{ij}) \leq n$ \& $\mathrm{Im}(a_{ij}) \neq 0$:}} Here, $x_1 =0$ in \eqref{eq:LemPrGeq}, so the constraint $|\Pcal_j| \leq 1$ holds for the union of $\Tcal_i$ and $\Tcal_j$, while the imaginary axis is used to encode whether the 2-nd condition in Definition \ref{de:Mergeability} is satisfied. This ensures that in case of merging $\Tcal_i$ to $\Tcal_j$ the root(s) of $\Tcal_j$ still have a path to all nodes in $\Tcal_i$. The root node(s) in $\Tcal_i$ is/are encoded by $\mathrm{i}$ in \eqref{eq:aij}, which has a path to all its nodes. Then, if an edge exists from a node in $\Tcal_j$ to the root denoted by $\mathrm{i}$ in $\Tcal_i$, there also exists a path from the root of $\Tcal_j$ to all the nodes in $\Tcal_i$, so $\fM_{ij} = 1$.

\textbf{\textit{Case $\mathrm{Re}(a_{ij}) \leq n$ \& $\mathrm{Im}(a_{ij}) = 0$:}} In this case the same reasoning holds, but the result is not mergeable, since there is not a path from the root(s) of $\Tcal_j$ to the nodes in $\Tcal_i$. Therefore, the result is $\fM_{ij} = \varnothing$, which concludes the proof.
\end{proof}

With Lemma \ref{lemma2} the characteristic matrix of the initial covering can directly be calculated on the basis of the adjacency matrices $A(\Gcal_p)$ and $A(\Gcal_f)$.

\subsection{Merging procedure}
Next we will represent a merging operation on a SIMUG covering through an equivalent operation of the characteristic matrix.
In view of this we define $\fM \in \bS^{|\Pi_0|\times |\Pi_0|}$, and let $\fM_{i\star}$ and $\fM_{\star j}$ be the $i$-th row and $j$-th column of   $\fM$, respectively.

For a given disjoint SIMUG covering $\Pi$ with $|\Pi| = n$ and a set $\mathbb{N} := \{ 1,2, \dots, n \}$, we then define the following function
	$\mathcal{F} : \bS^{n\times n} \times \mathbb{N} \times \mathbb{N} \to \bS^{(n-1) \times (n-1)}$,
and $\hat{\fM} = \mathcal{F}(\fM, i,j)$ is a reduction of $\fM$ obtained by merging SIMUG $i$ into SIMUG $j$ by the following algebraic operations:
1) $\hat{\fM} = \fM$;
2) Row merging: $\hat{\fM}_{j \star} = \fM_{i\star} \otimes \fM_{j \star}$;
3) Column merging: $\hat{\fM}_{\star j} = \fM_{\star i} \odot \fM_{\star j}$;
4) Remove the $i$-th row and column of $\hat{\fM}$.

The row merging operator $\otimes$ and the column operator $\odot$ are given as follows. First, the column operator $\odot$ is commutative defined to describe the merging feature algebraically:
	$c = a \odot b = b \odot a$,
with $a,b,c \in \bS$, which follow the rules
\begin{align}
	& 1 \odot 1 = 1, 1 \odot 0 = 0, 1 \odot \varnothing = 1, \label{eq:col_merge_rules1} \\
	& 0 \odot 0 = 0, \varnothing \odot 0 = 0, \varnothing \odot \varnothing = \varnothing, \label{eq:col_merge_rules2}
\end{align}
Then, the row merging operator $\otimes$ is similar and
adheres to the same rules as \eqref{eq:col_merge_rules1}-\eqref{eq:col_merge_rules2}, but differs in the rule $1 \odot \varnothing = 1$, where it is also not commutative. Instead of $1 \odot \varnothing = 1$ the following rules apply:
\begin{equation} \label{eq:row_merge_rules}
		\varnothing \otimes 1 = 1, 1 \otimes \varnothing = \varnothing.
\end{equation}

Now, we extend the operators to entrywise vector operations, for which we let $\rho, \mu \in \bS^n$ be two column (or row) vectors. The following operations $\rho \odot \mu$ and $\rho \otimes \mu$ are entrywise operations that return a new column (or row) vector, whose $i$-th element is given by $\rho_i \odot \mu_i$ or $\rho_i \otimes \mu_i$, respectively.

\begin{theorem}
	Consider a directed graph $\hat{\Gcal}$, and let $\Pi$ be a disjoint SIMUG covering of $\hat{\Gcal}$ represented by the characteristic matrix $\fM$. If in $\Pi$, the $i$-th SIMUG is mergeable to the $j$-th one, then a new SIMUG covering $\hat{\Pi}$ of $\hat{\Gcal}$ is obtained by merging the $i$-th SIMUG into the $j$-th one, where $|\hat{\Pi}| = |\Pi| - 1$ and the characteristic matrix of $\hat{\Pi}$ is given by $\hat{\fM}= \mathcal{F}(\fM,i,j)$.
\end{theorem}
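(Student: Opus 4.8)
The plan is to treat the two assertions separately: first that $\hat{\Pi}$ is a legitimate disjoint SIMUG covering with $|\hat\Pi|=|\Pi|-1$, and then that its characteristic matrix equals $\mathcal{F}(\fM,i,j)$ entry by entry. For the first part I would set $\hat{\Tcal}_j := (V(\Tcal_i)\cup V(\Tcal_j),\,E(\Tcal_i)\cup E(\Tcal_j))$ and $\hat{\Pi}:=(\Pi\setminus\{\Tcal_i,\Tcal_j\})\cup\{\hat{\Tcal}_j\}$, so $|\hat{\Pi}|=|\Pi|-1$ is immediate. Condition~1 of Definition~\ref{de:Mergeability} makes $\hat{\Tcal}_j$ a SIMUG, and condition~2 shows that $\Upsilon(\Tcal_j)$ reaches every node of $V(\Tcal_i)\cup V(\Tcal_j)$, so I would fix $\Upsilon(\hat{\Tcal}_j):=\Upsilon(\Tcal_j)$ as its root set. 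To verify that $\hat{\Pi}$ is edge-disjoint I would check the two conditions of Definition~\ref{de:disjoint} pairwise against each surviving $\Tcal_k$: the no-common-edges property is inherited because $\Tcal_i,\Tcal_j,\Tcal_k$ were pairwise edge-disjoint, and the single-origin property of outgoing edges is inherited because any outgoing edge that previously lay in $\Tcal_i$ or $\Tcal_j$ now lies in $\hat{\Tcal}_j$. Since $E(\hat{\Tcal}_j)=E(\Tcal_i)\cup E(\Tcal_j)$, the covered edge set is unchanged, which finishes the covering claim.

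For the characteristic matrix I would compare $\hat{\fM}$, computed from $\hat{\Pi}$ through Definition~\ref{de:MergMat}, against $\mathcal{F}(\fM,i,j)$. Any entry between two surviving SIMUGs $\Tcal_k,\Tcal_l$ with $k,l\neq i,j$ involves unchanged subgraphs and so carries over verbatim, matching the copy-and-delete steps of $\mathcal{F}$; all the work is in row $j$ and column $j$. I would organise both by the semantic reading of the symbols: $\fM_{ab}=0$ encodes $\Zcal_{ab}\neq\emptyset$ (the pair is not a SIMUG), $\fM_{ab}=\varnothing$ encodes that the pair is a SIMUG but $\Upsilon(\Tcal_b)$ fails to reach all of $V(\Tcal_a)$, and $\fM_{ab}=1$ encodes full mergeability.

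For the $0$-outcome I would prove that $\Tcal_i\cup\Tcal_j\cup\Tcal_k$ is a SIMUG if and only if $\fM_{ik}\neq 0$ and $\fM_{jk}\neq 0$, which makes $0$ absorbing in both $\odot$ and $\otimes$ as required by \eqref{eq:col_merge_rules1}--\eqref{eq:row_merge_rules}. The forward direction is immediate. The reverse direction is where I expect the main obstacle: a node $x$ could a priori receive one parametrized in-neighbour from $V(\Tcal_i)$ and another from $V(\Tcal_j)$ so that the conflict is invisible in $\Zcal_{ij}$ (since $\Tcal_i\cup\Tcal_j$ is already a SIMUG) and yet split across the two pairs. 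I would rule this out with the single-origin condition of edge-disjointness: tracing each offending parametrized edge back to the unique SIMUG carrying its tail's outgoing edges forces $x$ into $V(\Tcal_i)\cap V(\Tcal_j)$ or forces both offending in-neighbours to land inside one pair, so that any triple conflict is always detected by $\Zcal_{ik}$ or $\Zcal_{jk}$.

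Finally, the distinction between $1$ and $\varnothing$ is purely a reachability question, and the asymmetry of the two operators is explained by transitivity through the merge paths. For row $j$ (mergeability of $\hat{\Tcal}_j$ \emph{to} $\Tcal_k$) one needs $\Upsilon(\Tcal_k)$ to reach all of $V(\Tcal_i)\cup V(\Tcal_j)$; but once $\Upsilon(\Tcal_k)$ reaches $\Upsilon(\Tcal_j)$ it reaches $V(\Tcal_i)$ for free along the merge paths, so the outcome is governed solely by the second operand $\fM_{jk}$, reproducing the non-commutative rules $\varnothing\otimes 1=1$ and $1\otimes\varnothing=\varnothing$. For column $j$ (mergeability of $\Tcal_k$ \emph{to} $\hat{\Tcal}_j$) one needs $\Upsilon(\hat{\Tcal}_j)=\Upsilon(\Tcal_j)$ to reach all of $V(\Tcal_k)$, which now succeeds either directly (when $\fM_{kj}=1$) or transitively via $\Upsilon(\Tcal_i)$ (when $\fM_{ki}=1$, since $\Upsilon(\Tcal_j)$ reaches $\Upsilon(\Tcal_i)$, which reaches all of $V(\Tcal_k)$); this ``either source suffices'' behaviour is exactly the commutative rule $1\odot\varnothing=1$, while $\varnothing\odot\varnothing=\varnothing$ when neither source reaches $\Tcal_k$. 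Assembling the $0$, $\varnothing$ and $1$ cases over row $j$ and column $j$ and accounting for the deletion of row and column $i$ would then yield $\hat{\fM}=\mathcal{F}(\fM,i,j)$.
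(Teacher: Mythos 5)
Your proposal follows essentially the same route as the paper's proof: a case-by-case semantic verification that each rule of $\odot$ and $\otimes$ correctly encodes mergeability to and from the merged SIMUG, with the asymmetry of $\otimes$ explained by the roots of the union being $\Upsilon(\Tcal_j)$ and reachability of $V(\Tcal_i)$ obtained transitively along the merge paths. You are in fact more careful than the paper in two places it glosses over --- explicitly verifying that $\hat{\Pi}$ is itself a valid edge-disjoint SIMUG covering of the same edge set, and ruling out a ``split'' in-degree conflict invisible to the pairwise sets $\Zcal_{ik}$, $\Zcal_{jk}$ via the single-origin property of edge-disjointness --- but the underlying argument is the same.
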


\begin{proof}
This proof shows that, given a SIMUG covering, merging two SIMUGs and finding subsequent mergeable SIMUGs for the new covering is equivalent to applying the rules in \eqref{eq:col_merge_rules1}-\eqref{eq:row_merge_rules} to the mergeability matrix. First, the rules for the column merging is treated, followed by the row merging operator. Suppose $\Tcal_i$ is mergeable to $\Tcal_j$, then, the row operator defines for the reduced covering $\hat{\Pi}$ to which SIMUGs the union of $\Tcal_i$ and $\Tcal_j$ is mergeable, while the column operator defines which SIMUGs are mergeable to the union of $\Tcal_i$ and $\Tcal_j$. Therefore, the following statements hold for the column operator $\odot$ and the row operator $\otimes$ in \eqref{eq:col_merge_rules1}-\eqref{eq:col_merge_rules2}:
\begin{enumerate}
    \item If either $\Tcal_i$ or $\Tcal_j$ can not merge to or be merged to $\Tcal_x$ due to a '$0$' entry in the mergeability matrix, then, $\Tcal_x$ can not merge to or be merged to the union of $\Tcal_i$ and $\Tcal_j$. Hence, the rules $0 \odot 0 = 0$, $\varnothing \odot 0 = 0$ and $1 \odot 0 = 0$. \label{test}

    \vspace*{-4mm}
    \item If $\Tcal_x$ is mergeable to both $\Tcal_i$ and $\Tcal_j$, then, $\Tcal_x$ is also mergeable to the union of $\Tcal_i$ and $\Tcal_j$ and vice versa, i.e., $1\odot 1 = 1$.
    \item If $\fM_{xi} = \varnothing$, then, $\Tcal_x$ is not mergeable to $\Tcal_i$, but if $\Tcal_x$ is mergeable to $\Tcal_j$, then, $\Tcal_x$ is also mergeable to the union of $\Tcal_i$ and $\Tcal_j$ or vice versa, since the condition for a SIMUG that $|\Pcal_l| \leq 1 \ \forall \ l \in V(\Tcal_i \cup \Tcal_j \cup \Tcal_x)$ will still be satisfied. i.e., $\varnothing \odot 1 = 1$. Moreover, if $\fM_{xj} = \varnothing$ too, then, the same condition is satisfied, so $\varnothing \odot \varnothing = \varnothing$.
\end{enumerate}
Then, the row operator $\otimes$ has one logical difference with respect to the column operator $\odot$, i.e. \eqref{eq:row_merge_rules}, which is shown in the following statements:
\begin{enumerate}
    \item For the rule: $\varnothing \otimes 1 = 1$, for which $\fM_{ix} = \varnothing$ and $\fM_{jx} = 1$, $\Tcal_i$ is merged to $\Tcal_j$, so the root(s) of the union of $\Tcal_i$ and $\Tcal_j$ is/are the root(s) of $\Tcal_j$. Since $\Tcal_j$ was mergeable to $\Tcal_x$ and $\Tcal_i$ does not pose a problem to merge to $\Tcal_x$, the union of $\Tcal_i$ and $\Tcal_j$ is mergeable to $\Tcal_x$.
    \item Now for the rule: $1 \otimes \varnothing = \varnothing$ for which $\fM_{ix} = 1$ and $\fM_{jx} = \varnothing$, $\Tcal_i$ is merged to $\Tcal_j$, so the root(s) of the union of $\Tcal_i$ and $\Tcal_j$ is/are the root(s) of $\Tcal_j$. However, the root(s) of the union of $\Tcal_i$ and $\Tcal_j$ do not necessarily have a path to all nodes in $\Tcal_x$.
\end{enumerate}
Following these statements, the operators that are used in the function $\Fcal(\fM,i,j)$ are shown to be equivalent to merging two SIMUGs in the covering. The function $\Fcal(\fM,i,j)$ therefore reduces the mergeability matrix and updates it according to the merged SIMUGs.
\end{proof}

On the basis of the above merging operation we can now follow the same merging algorithm as presented in \cite{Cheng&etal_TAC:22}, where we first find the row of the characteristic matrix with a unique `$1$' entry, on the basis of which the SIMUG related to this row number is merged. Then we continue with merging SIMUGs related to `$1$' entries in nodes, giving priority to those rows that have the most `$\varempty$' entries. For more details and motivations we refer to Algorithm 1 in \cite{Cheng&etal_TAC:22}. Note that, like in \cite{Cheng&etal_TAC:22} there is no formal guarantee that we arrive at a minimal number of SIMUGs covering the network.

\section{Allocating excitation signals}
\label{sec:allo}

With the result of Theorem \ref{th:MRG_ident} we can guarantee generic network identifiability if we make sure that in an edge-disjoint SIMUG covering of the network graph, one root in every SIMUG is excited by an external $r$ or $e$-signal. Having such a SIMUG covering obtained in the previous Section, we can now detect those SIMUGs that do not have such an excitation yet from existing $r$ and $e$ signals, and allocate  additional $r$ signals to the root nodes of these SIMUGs.

In comparison with the situation of all-parametrized modules \cite{Cheng&etal_TAC:22}, there are two distinctive situations to be mentioned:
\begin{itemize}
\item For a SIMUG for which all roots have only outgoing links that are fixed, the external excitation does not necessarily need to be added to the root.
\item SIMUGs that are composed of fixed edges only, do not require excitation, and they do not need to be merged to other nodes. However they can serve in the role of interconnecting two other SIMUGs that do require excitation, and for which merging with the fixed SIMUG can reduce the number of excitation signals required.
\end{itemize}

After having allocated a sufficient number of $r$ signals conditions, a final check is made for necessity of all allocated $r$ signals, by individually removing them and verifying the vertex disjoint path condition of Lemma \ref{lem:Identifiability}.

\section{Example}
The presented algorithm is applied to an $8$-node network depicted in Figure \ref{fig:AllocationExample} that shows typical locations of fixed modules that introduces conservatism in the methods in \cite{Cheng&etal_TAC:22}.
For generic network identifiability external (excitation) signals are added to a root node of each SIMUG, in this case, one of the nodes $\{w_1,w_6\}$, and one of the nodes $\{w_7,w_8\}$.

\begin{figure}[htp]
\subfloat[]{\includegraphics[width=0.48\columnwidth]{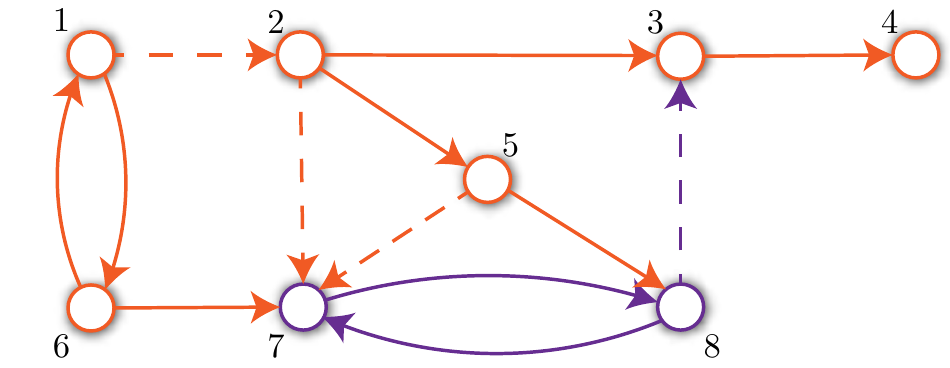}}
\subfloat[]{\includegraphics[width=0.48\columnwidth]{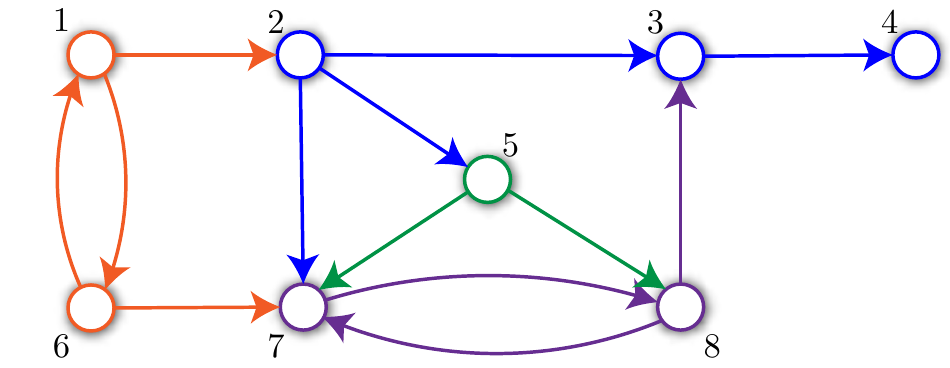}}
\caption{(a) A disjoint SIMUG covering of a network with $2$ SIMUGs; fixed modules are indicated with dashed lines, parametrized modules with solid lines. (b) A pseudotree covering of the same network with all modules parametrized.}
\label{fig:AllocationExample}
\vspace{-.6cm}
\end{figure}


This result can be compared against the pseudotree covering in \cite{Cheng&etal_TAC:22} in two ways. First, the approach where the fixed edges are excluded to cover the parametrized modules only does not consider the connection between node $w_1$ and $w_2$. Hence, another pseudotree is required with node $w_2$ as root. Second, if in this network all present modules would be parametrized, the resulting pseudotree covering is depicted in Figure \ref{fig:AllocationExample}(b). In this situation we arrive at four pseudotrees leading to additional (excitation) signals to be added at nodes $w_2$ and $w_5$.




\section{Conclusion}
A graphical method for allocating external signals to achieve generic identifiability of a dynamic network model set, has been extended to the situation where network modules can be fixed. To this end, a pseudotree covering of the network graph has been generalized to a covering based on a special type of multi-rooted graph (SIMUG). The related merging algorithm also has been generalized, aiming at the allocation of a minimum number of external excitation signals for achieving generic identifiability. While minimality of the number of allocated signals cannot be guaranteed, including the property that some modules might be known, and thus fixed, reduces the number of external signals that guarantee identifiability.

\bibliographystyle{IEEEtran}
\bibliography{literature,Paul_Dynamic_Networks_Library}

\end{document}